\newtheorem{thm}{Theorem}[section]
\title{Minimal Number of Observables for Quantum Tomography of Systems with Evolution Given by Double Commutators}
\author{Artur Czerwi{\'n}ski\\
\\ \small ResearchGate: \url{www.researchgate.net/profile/Artur_Czerwinski} \\
\small E-mail: aczerwin@fizyka.umk.pl \\
  \small 1. Institute of Physics\\
  \small Nicolaus Copernicus University\\
  \small 87-100 Toru{\'n}\\
	\small 2. Center for Theoretical Physics \\
\small Polish Academy of Sciences \\
\small 02-668 Warszawa\\
}
\begin{document}
\maketitle

\begin{abstract}
In this paper we analyze selected evolution models of  $N-$level open quantum systems in order to find the minimal number of observables (Hermitian operators) such that their expectation values at some time instants determine the accurate representation of the quantum system. The assumption that lies at the foundation of this approach to quantum tomography claims that time evolution of an open quantum system can be expressed by the Kossakowski - Lindblad equation of the form $\dot{\rho} = \mathbb{L} \rho$, which is the most general type of Markovian and time-homogeneous master equation which preserves trace and positivity. We consider the cases when the generator of evolution can be presented by means of two or more double commutators. Determining the minimal number of observables required for quantum tomography can be the first step towards optimal tomography models for $N-$level quantum systems. 
\end{abstract}

%\keywords{open quantum systems, quantum tomography, state reconstruction, stroboscopic observability}

\section{Introduction}

In this paper we shall denote the Hilbert space by $\mathcal{H}$ and we shall assume that $dim \mathcal{H} = N < \infty $. $B(\mathcal{H})$ shall refer the complex vector space of all bounded linear operators in $\mathcal{H}$. Naturally, $B(\mathcal{H})$ is isomorphic with the space of N-dimensional complex matrices. The latter shall be represented by $\mathbb{M}_N (\mathbb{C})$. Finally, $B_* (\mathcal{H})$ shall refer to the real Banach space of self-adjoint (hermitian) operators on $\mathcal{H}$. In Physics the elements of $B_* (\mathcal{H})$ are called observables due to the fact that an element from this space is assigned to every measurable quantity .

The term \textit{quantum tomography} is used in reference to all methods and approaches which aim to reconstruct the accurate representation of a quantum system by conducting a series a measurements. Among a large number of approaches to quantum tomography one can especially mention the so-called static model of tomography, which requires $N^2 - 1$ measurements of distinct observables taken at time instant $t=0$ (see more in \cite{altepeter04,alicki87,genki03}). Evidently, as the number of observables increases with the square of $N$ this approach seems rather impracticable. Therefore, there is a need to devise other, more efficient, models - for example in \cite{ole95} one can read about reconstruction of the density matrix by simple projectors. A paper published in 2011 introduced a new approach to quantum tomography which is based upon weak measurement. In that paper it has been proved that the wave function of a pure state can be measured directly \cite{bamber}, in a contrast to a common belief. Subsequent papers revealed that this approach can be generalized also for mixed state identification \cite{wu}. 

In this paper we follow yet another approach to quantum tomography - the stroboscopic tomography (or stroboscopic observability) which initiated in 1983 in the article \cite{jam83}. This approach was developed in further papers, such as \cite{jam00,jam04}. One can also refer to a well-written review article \cite{jam12}, which contains all fundamental results. The main advantage of this method is that it enables to determine the optimal criteria for observabiliy of an open quantum system. Thus it seems to have considerable potential applications in experiments.

The assumption that lies at the very foundation of the stroboscopic tomography claims that the evolution of an open quantum system can be expressed by a master equation of the form
\begin{equation}\label{eq:kossak}
\dot{\rho} = \mathbb{L} \rho,
\end{equation}
where the operator $\mathbb{L}$ shall be referred to as the generator of evolution and its most general structure in diagonalized form can be expressed as \cite{gorini76,lindblad76}
\begin{equation}\label{eq:general}
\mathbb{L} \rho = - i [H,\rho] + \frac{1}{2} \sum_{i=1} ^{N^2 -1} \gamma_ i \left ( [V_i \rho, V_i ^*] + [V_i, \rho V_i ^*]  \right ),
\end{equation}
where $H\in B_* (\mathcal{H})$ and $\gamma_i \geq 0 $. Operators $V_i \in B(\mathcal{H})$ are called Lindblad operators.

One can easily notice that if $V_i = V_i ^*$ for $i=1,\dots,N^2 -1$, one is able to rewrite the general form of the generator of evolution as
\begin{equation}\label{eq:commutators}
\mathbb{L} \rho = - i [H,\rho] - \frac{1}{2} \sum_{i=1} ^{N^2 -1} \gamma_ i [V_i, [V_i, \rho]],
\end{equation}
where the dissipative part of the generator is given by means of double commutators. Generators of the form given by \eqref{eq:commutators} have been the subject of many analysis in quantum Physics. Recent results introduced in \cite{kasia15} show that the double commutator form of the generator is convenient for analyzing qubit decoherence in the framework of geometric quantum mechanics. Thus the main motivation for the current paper comes from the observation that double commutators appear in the structure of the generator of evolution under the assumption that the Lindblad operators are Hermitian along with the fact that the generator of structure given by \eqref{eq:commutators} is considered in other aspects of quantum Physics.

In order to determine the initial density matrix $\rho(0)$ (and on the basis of the knowledge about the evolution - the complete trajectory of the state) one needs to assume the availability of a set of identically prepared quantum systems, each with evolution given by the master equation with the generator $\mathbb{L}$. Furthermore, one has to bear in mind that each system can be measured only once, because any measurement, in general, changes the state. 

In the stroboscopic tomography one assumes to have a set of $r$ distinct hermitian operators denoted by $\{Q_1, \dots, Q_r \}$ ($Q_i^* = Q_i$), where $r<N^2 -1$. Each operator can be measured at one or more time instants from the set $\{t_1, \dots, t_p\}$. The measurement results, denoted by $m_i (t_j)$, are represented by the formula $m_i (t_j) = Tr(Q_i \rho(t_j))$. From many possible research problems connected with the stroboscopic tomography, in this paper we shall discuss the aspect of the minimal number of observables required for quantum tomography, which is considered the most important.  One can recall the theorem \cite{jam04}.

\begin{thm}\label{thm:1}
For a quantum system with the time evolution given by the Kossakowski-Lindblad master equation of the form \eqref{eq:kossak}, there exists a number (denoted by $\eta$) which determines the minimal number of observables required to reconstruct the density matrix. It can be computed from the equality
\begin{equation}
\eta := \max \limits_{\lambda \in \sigma (\mathbb{L})} \{ dim Ker (\mathbb{L} - \lambda \mathbb{I})\}.
\end{equation}
The number $\eta$ shall be called \textit{the index of cyclicity} of the quantum system.
\end{thm}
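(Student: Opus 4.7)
The plan is to translate the tomographic problem into a linear-algebraic question about cyclic (Krylov) subspaces of the generator $\mathbb{L}$, and then invoke the classical cyclic-decomposition theorem. Since $\rho(t)=e^{t\mathbb{L}}\rho(0)$, letting $\mathbb{L}^{\sharp}$ denote the adjoint of $\mathbb{L}$ with respect to the Hilbert--Schmidt product $\langle A,B\rangle=\mathrm{Tr}(A^{*}B)$, the data become
\begin{equation*}
m_i(t_j)=\mathrm{Tr}\bigl(Q_i\, e^{t_j\mathbb{L}}\rho(0)\bigr)=\langle e^{t_j\mathbb{L}^{\sharp}} Q_i,\,\rho(0)\rangle .
\end{equation*}
Knowing all $m_i(t_j)$ is therefore equivalent to knowing the inner products of $\rho(0)$ with every vector of $\mathcal{S}:=\mathrm{span}\{e^{t_j\mathbb{L}^{\sharp}} Q_i\}$, and $\rho(0)$ is uniquely reconstructible iff $\mathcal{S}=B_{*}(\mathcal{H})$. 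By the Cayley--Hamilton theorem each $e^{t\mathbb{L}^{\sharp}}Q_i$ lies in the cyclic subspace $\mathcal{K}_i:=\mathrm{span}\{(\mathbb{L}^{\sharp})^k Q_i:k\geq 0\}$; conversely, for enough distinct time instants a Vandermonde-type argument in the eigenvalues of $\mathbb{L}^{\sharp}$ shows that the $e^{t_j\mathbb{L}^{\sharp}}Q_i$ actually span $\mathcal{K}_i$. So the problem reduces to finding the smallest $r$ for which there exist $Q_1,\ldots,Q_r\in B_{*}(\mathcal{H})$ with $\mathcal{K}_1+\cdots+\mathcal{K}_r=B_{*}(\mathcal{H})$. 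Here $\mathbb{L}^{\sharp}$ preserves $B_{*}(\mathcal{H})$ because $\mathbb{L}$ is hermiticity-preserving in the Kossakowski--Lindblad form \eqref{eq:general}, so the $\mathcal{K}_i$ genuinely live in the correct real vector space.

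This minimum is the cyclic index of $\mathbb{L}^{\sharp}$, and I would establish it via the rational canonical form: any linear operator on a finite-dimensional space admits a cyclic decomposition whose number of summands equals the maximum geometric multiplicity of its eigenvalues. The lower bound is immediate, since a single cyclic subspace meets a given eigenspace in at most a line, so fewer than $\eta$ generators cannot exhaust an eigenspace of dimension $\eta$; the upper bound is realized by the cyclic vectors supplied by the rational canonical form. Because $\mathbb{L}$ and $\mathbb{L}^{\sharp}$ have the same spectrum and the same geometric multiplicities, this common value equals $\eta=\max_{\lambda\in\sigma(\mathbb{L})}\dim\ker(\mathbb{L}-\lambda\mathbb{I})$.

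The main obstacle is the Vandermonde step: one must show that the finitely many time instants $t_1,\ldots,t_p$ can always be chosen so that $\{e^{t_j\mathbb{L}^{\sharp}}Q_i\}_j$ spans each $\mathcal{K}_i$ fully, rather than only a proper subspace. This reduces to the non-vanishing of a determinant which is real-analytic in $(t_1,\ldots,t_p)$ and can therefore be arranged for generic choices; quantifying how many instants $p$ actually suffice is precisely where the refined information about the invariant factors of $\mathbb{L}$ (beyond the single number $\eta$) enters, but for Theorem~\ref{thm:1} the mere existence of some such $p$ is all that is needed. A secondary but necessary check is that the cyclic-decomposition argument, although naturally formulated over $\mathbb{C}$, transfers to the real vector space $B_{*}(\mathcal{H})$; this follows from complexifying $\mathbb{L}^{\sharp}$ and observing that complex eigenvalues come in conjugate pairs with matching multiplicities, leaving $\eta$ unchanged.
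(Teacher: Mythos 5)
The paper does not actually prove Theorem~\ref{thm:1}; it only recalls it from \cite{jam04}, and your argument is essentially the standard proof given in that reference: Hilbert--Schmidt duality converts the data $m_i(t_j)$ into inner products of $\rho(0)$ with vectors in the Krylov subspaces of the adjoint generator, and the cyclic decomposition (rational canonical form) identifies the minimal number of Krylov generators covering $B_{*}(\mathcal{H})$ with the maximal geometric multiplicity $\max_{\lambda\in\sigma(\mathbb{L})}\dim\ker(\mathbb{L}-\lambda\mathbb{I})$. Your treatment of the two delicate points --- choosing the time instants via a confluent-Vandermonde nondegeneracy condition, and transferring the cyclic index from $\mathbb{C}$ to the real space $B_{*}(\mathcal{H})$ --- matches how the literature handles them, so the proposal is correct and takes the same route as the cited source.
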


If one wants to employ a quantum tomography model in an experiment, then the index of cyclicity $\eta$ tells them how many distinct experimental setups one needs to prepare to be able to reconstruct the density matrix. In general, it is more efficient to repeat the same kind of measurement a few times then to prepare a few different kinds of measurements. Therefore, we believe that the stroboscopic tomography possesses significant potential for future applications as it focuses on determining the optimal criteria for quantum tomography, i.e. the minimal number of distinct observables. One can observe that the value of $\eta$ depends on the geometric properties of the generator of evolution, but the meaning of $\eta$ is purely physical.

The well-known results concerning the index of cyclicity for $N-$level quantum systems relate to the time evolution given by a von Neumann equation (see the result in \cite{jam83}) and Gaussian semigroups, i.e. the generator of evolution given by a double commutator $\mathbb{L} \rho = - \frac{1}{2} [H, [H,\rho]]$ where $H \in B_* (\mathcal{H})$ (see more in \cite{jam04}). In this article we consider more general evolution models when the generator is given by two or more double commutators. From mathematics we can adopt methods that allow to analyze such operators with respect to the eigenvalues and minimal polynomial (see for example \cite{marcus71}).

This paper shows that under certain assumptions one can successfully calculate the index of cyclicity for an $N-$level open quantum system for a variety of evolution models. This work can be the first step towards complete and optimal tomography models for $N-$level quantum systems.

\section{Generator of evolution with two Lindblad operators}

In this section we shall analyze an open quantum system with the generator of evolution that consists of two Lindblad operators, which shall be denoted by $G$ and $F$ (of course, $F,G \in B(\mathcal{H})$). Thus the evolution of the quantum system in question is given by
\begin{equation}\label{eq:1}
\dot{\rho} = \frac{1}{2} \left ( [F \rho, F^*]+[F, \rho F^*] + [G \rho, G^*] + [G, \rho G^*] \right ).
\end{equation}

In order to be able to discuss the problem of observability of a system with such evolution we have to assume a kind of relation between the Lindblad operators $F$ and $G$. In this part we propose to consider the case when $F = G^*$. With this assumption it is possible to rewrite equation \eqref{eq:1} in the form
\begin{equation}\label{eq:2}
\dot{\rho} = \frac{1}{2} \left ( [F \rho, F^*]+[F, \rho F^*] + [F^* \rho, F] + [F^*, \rho F] \right ).
\end{equation}

One can easily observe that the generator of evolution can be expressed by means of two double commutators
\begin{equation}\label{eq:3}
\mathbb{L} \rho = - \frac{1}{2} \left ( [F^*, [F,\rho]] + [F, [F^*, \rho]] \right ). 
\end{equation}

If one wants to obtain a specific result concerning the index of cyclicity, it is necessary to make more assumptions about the operator $F$ and its eigenvalues. First, let us denote by $\sigma (F)$ the spectrum of the operator $F$. If we assume that this operator has $r$ distinct eigenvalues, then the spectrum can be written as
\begin{equation}\label{eq:4}
\sigma (F) = \{ \alpha_1, \dots , \alpha_r \},
\end{equation}
where in general $\alpha_i \in \mathbb{C}$. With each eigenvalue $\alpha_i$ we associate its multiplicity denoted by $n_i$. Obviously, $\sum_{i=1}^r n_i = N$, where $N$ refers to the dimension of the Hilbert space related to the analyzed system.

The generator of evolution in \eqref{eq:3} is still too general to consider the problem of observability. There might be many approaches to specify this generator and make the problem solvable. In this section, in order to get a concrete result, we assume that the operator $F$ is unitary, i.e. $F^* = F^{-1}$. 

Then the explicit form of the generator of evolution $\mathbb{L}$ can be obtained by using the idea of vectorization \cite{henderson81}. One can get
\begin{equation}\label{eq:5}
\mathbb{L} =  (F^{-1})^T \otimes F + F^T \otimes F^{-1}   - 2 \mathbb{I}_N \otimes \mathbb{I}_N.
\end{equation}

For the generator of evolution presented in\eqref{eq:5} we can formulate the following theorem and prove it.

\begin{thm}
The index of cyclicity for the generator of evolution given by \eqref{eq:5} is expressed by 
\begin{equation}\label{thm1}
\eta = max \{ \sum_{i=1}^r  n_i ^2 , \delta_1, \dots, \delta_p \},
\end{equation}
where $\delta_k$ is defined as follows 
\begin{equation}\label{thm2}
\delta_k = 2 \sum_{i=1}^{r-k} n_i n_{i+k}
\end{equation}
and $p = \frac{r-1}{2}$ if $r$ is odd or $ p = \frac{r-2}{2} $ of r is even.
\end{thm}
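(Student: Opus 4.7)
The plan is to diagonalize $\mathbb{L}$ from \eqref{eq:5} explicitly, enumerate its distinct eigenvalues together with their geometric multiplicities, and then invoke Theorem~\ref{thm:1}. The three main steps would be: simultaneous diagonalization of the two tensor summands of \eqref{eq:5}; collection of the eigenvalue multiplicities into the terms $\sum n_i^2$ and $\delta_k$; and trimming the range of the index $k$ down to $p$.

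First, I would exploit the unitarity of $F$: being unitary, $F$ is diagonalizable, its eigenvalues $\alpha_1,\dots,\alpha_r$ lie on the unit circle, and $F$ and $F^{-1}$ share an eigenbasis with eigenvalues $\alpha_i$ and $\alpha_i^{-1}$ of common multiplicity $n_i$. A short computation of $[(F^{-1})^T\otimes F,\,F^T\otimes F^{-1}]$, which uses only $FF^{-1}=F^{-1}F=\mathbb{I}$, shows that the two tensor summands in \eqref{eq:5} commute and are therefore simultaneously diagonalizable. Hence $\mathbb{L}$ is diagonalizable in a tensor-product eigenbasis indexed by pairs $(i,j)$ of eigenvalue labels of $F$; on the eigenspace attached to $(i,j)$, whose dimension equals $n_i n_j$, the eigenvalue of $\mathbb{L}$ is
\[
\lambda_{ij}=\frac{\alpha_j}{\alpha_i}+\frac{\alpha_i}{\alpha_j}-2=2\cos\bigl(\arg\alpha_j-\arg\alpha_i\bigr)-2\le 0.
\]

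Next, I would collect these $N^2$ eigenvalues into distinct values. The eigenvalue $0$ is attained exactly when $\alpha_i=\alpha_j$, i.e., $i=j$, so its total geometric multiplicity is $\sum_{i=1}^{r}n_i^2$, which is the first entry in \eqref{thm1}. For $i\neq j$ the identity $\lambda_{ij}=\lambda_{ji}$ automatically pairs $(i,j)$ with its reverse, which accounts for the prefactor $2$ in \eqref{thm2}. Labelling the eigenvalues of $F$ so that the arguments $\arg\alpha_i$ form an increasing arithmetic progression (the natural labelling for a unitary $F$), the quantity $\lambda_{ij}$ depends only on the ``gap'' $k=|i-j|\in\{1,\dots,r-1\}$, and the geometric multiplicity of the corresponding eigenvalue $\lambda_k$ of $\mathbb{L}$ is exactly $\delta_k$.

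Finally, I would show that only the indices $k=1,\dots,p$ are needed in the maximum. The key observation is the elementary estimate
\[
\delta_{r-k}=2\sum_{i=1}^{k}n_i n_{i+r-k}\le \sum_{i=1}^{k}\bigl(n_i^2+n_{i+r-k}^2\bigr)\le \sum_{i=1}^{r}n_i^2,
\]
which shows that every $\delta_j$ with $j>p$ is already dominated by the multiplicity $\sum n_i^2$ of the eigenvalue $0$ and can therefore be dropped from the maximum. A short parity analysis at $j=r/2$, where the ``self-reverse'' pair occurs only when $r$ is even, separates the two cases and yields the thresholds $p=(r-1)/2$ for $r$ odd and $p=(r-2)/2$ for $r$ even. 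Invoking Theorem~\ref{thm:1} then produces \eqref{thm1}. The main obstacle I anticipate is the bookkeeping in the second step: one must justify that, for a given $k$, \emph{every} pair $(i,i+k)$ yields the \emph{same} eigenvalue of $\mathbb{L}$ and that different values of $k$ yield different eigenvalues, which rests on the arithmetic-progression labelling of the $\alpha_i$ together with keeping the phases distinct modulo $2\pi$. Once the eigenvalue classes are correctly identified, the inequality above is the clean technical ingredient that precisely trims the range of $k$ to the interval $\{1,\dots,p\}$.
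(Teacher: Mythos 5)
Your proof follows essentially the same route as the paper's: read off the spectrum of $\mathbb{L}$ from the Kronecker structure of \eqref{eq:5}, obtain $\lambda_{ij}=\frac{\alpha_j}{\alpha_i}+\frac{\alpha_i}{\alpha_j}-2$ with multiplicity $n_i n_j$, group the off-diagonal pairs $(i,j)$ by the gap $k=|i-j|$ into the quantities $\delta_k$, and trim the range of $k$ to $\{1,\dots,p\}$ by the estimate $2n_i n_{i+k}\le n_i^2+n_{i+k}^2$ summed over the (disjoint) index sets --- an inequality the paper merely calls ``obvious'' and you actually write out, which is a small improvement. The one step that would fail as written is your assertion that labelling the $\alpha_i$ so that their arguments form an arithmetic progression is ``the natural labelling for a unitary $F$.'' Sorting the arguments into increasing order is always possible; making them \emph{equally spaced} is not (e.g.\ eigenvalues $1$, $e^{i\pi/7}$, $e^{i\pi/2}$ admit no such labelling), and without equal spacing $\lambda_{ij}$ does not depend only on $|i-j|$, so the entire $\delta_k$ bookkeeping collapses. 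The paper is explicit on exactly this point: it notes that from \eqref{eq:8} alone ``it is not possible to uniquely express the multiplicities of $\lambda_{ij}$ by $n_i$'' and then \emph{adds the hypothesis} that $\alpha_1,\dots,\alpha_r$ are successive terms of a geometric sequence --- which for unitary $F$ is precisely your arithmetic progression of phases. So your argument is the paper's argument, but the equal-spacing condition must be stated as an extra assumption on $F$, not presented as a free choice of labels. (Both your proof and the paper's also quietly assume that distinct gaps $k\in\{1,\dots,p\}$ yield distinct eigenvalues $q^k+q^{-k}-2=2\cos(k\Delta)-2$, which needs the common phase difference to be small enough; you at least flag this as an obstacle, the paper does not.)
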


\begin{proof}

One can instantly notice that  the operator $\mathbb{L}$ is hermitian (self-adjoint), which implies that the algebraic multiplicities of its eigenvalues are the same as the corresponding geometric multiplicities. Therefore, we can focus only on determining the algebraic multiplicities. In order to do that one can realize that \textit{transposition} does not change the spectrum of an operator. This observation leads to the conclusion that the eigenvalues of $\mathbb{L}$ are exactly the same as the eigenvalues of the operator $\mathbb{L}'$ which has the form
\begin{equation}\label{eq:6}
\mathbb{L}' =  F \otimes F ^{-1}+ (F \otimes F^{-1})^{-1}  - 2 \mathbb{I}_N \otimes \mathbb{I}_N.
\end{equation}

It can be noticed that the spectrum of $F \otimes F^{-1}$ is given by
\begin{equation}\label{eq:7}
\sigma (F \otimes F^{-1}) = \{\mu_{ij} \in \mathbb{C}; \mu_{ij} = \frac{\alpha_i}{ \alpha_j} ; \text{  }i,j = 1,\dots, r\}
\end{equation}
To each eigenvalue $\mu_{ij}$ corresponds its multiplicity $ n_i n_j.$ 

Bearing in mind equation \eqref{eq:6}, the spectrum of the generator $\mathbb{L}$ can be written as
\begin{equation}\label{eq:8}
\sigma (\mathbb{L}) = \{\lambda_{ij} \in \mathbb{R}; \lambda_{ij} =  \frac{\alpha_j}{ \alpha_i} + \frac{\alpha_i}{ \alpha_j} - 2; \text{  }i,j = 1,\dots, r\}
\end{equation}
And multiplicity of $\lambda_{ij}$ is equal $n_i n_j $.Based on the equation \eqref{eq:8} it is not possible to uniquely express the multiplicities of $\lambda_{ij}$ by $n_i$. But a specific case can be discussed when the eigenvalues $\alpha_1, \dots, \alpha_r$ are successive terms of a geometric sequence. For clarity it can also be assumed that $|\alpha_1| > |\alpha_2| > \dots > |\alpha_r|$. Let us denote
\begin{equation}\label{eq:9}
\frac{\alpha_{i+1}}{\alpha_i} = q,
\end{equation}
where i = 2,\dots,r-1. 

It is convenient to present the eigenvalues of $\mathbb{L}$ in a form of a matrix $[\lambda_{ij}]$:
\begin{equation}\label{eq:10}
\left[
\begin{matrix}
0 & q+\frac{1}{q} -2 & q^2+\frac{1}{q^2}-2 & q^3+\frac{1}{q^3}-2 &\cdots & q^{r-1}+\frac{1}{q^{r-1}}-2\\
 q+\frac{1}{q} -2 & 0 &  q+\frac{1}{q} -2 &  q^2+\frac{1}{q^2} -2 & \cdots & q^{r-2}+\frac{1}{q^{r-2}} -2 \\
 q^2+\frac{1}{q^2}-2 & q+\frac{1}{q} -2 & 0 &  q+\frac{1}{q}-2 & \cdots &  q^{r-3}+\frac{1}{q^{r-3}}-2 \\
 q^3+\frac{1}{q^3}-2 &  q^2+\frac{1}{q^2}-2 &  q+\frac{1}{q}-2 & 0 & \cdots & q^{r-4}+\frac{1}{q^{r-4}}-2 \\
\vdots & \vdots &  \vdots & \vdots & \ddots & \vdots \\
  q^{r-1}+\frac{1}{q^{r-1}}-2 &  q^{r-2}+\frac{1}{q^{r-2}}-2 & q^{r-3}+\frac{1}{q^{r-3}}-2 &  q^{r-4}+\frac{1}{q^{r-4}} -2 & \cdots & 0
\end{matrix}
\right ].
\end{equation}

It can easily be observed that the above matrix is symmetric and the eigenvalues that lie on the same diagonal are equal. Therefore, let us introduce denotation $\delta_k$ which will refer to the total multiplicity of all eigenvalues that lie on both $k^{th}$ diagonals (where $k=0$  for the main diagonal). According to this definition the following equality holds
\begin{equation}\label{eq:11}
\delta_k = 2 \sum_{i=1}^{r-k} n_i n_{i+k}
\end{equation}
It is sensible to consider only the cases for $ k \in \{1,..., p\}$ where $ p = \frac{r-1}{2}$ if $r$ is odd or $ p = \frac{r-2}{2}$ if $r$ is even, because for $ k > p  $ the obvious inequality holds:
\begin{equation}\label{eq:12}
\sum_{i=1}^r n_i ^2 > \delta_k.
\end{equation}
Therefore, the index of cyclicity for the analyzed generator can be expressed as
\begin{equation}\label{eq:13}
\eta = max\{\sum_{i=1}^r n_i ^2, \delta_1, \dots, \delta_p\},
\end{equation}
which ends the proof.
\end{proof}
In this section we have showed that under specific assumptions it is possible to calculate in a simple way the index of cyclicity for generators of evolution that consist of two Lindblad operators. Interestingly, the obtained final result is exactly the same as the index of cyclicity for the generator given by $\mathbb{L} \rho = - \frac{1}{2} [H,[H,\rho]]$ \cite{jam04}.

\section{Superposition of double commutators as the generator of evolution}

Let us first formulate the problem which will be analyzed in this section. Now we assume to have an operator $F \in B(\mathcal{H})$ which is hermitian, i.e. $F^* = F$, and, as it was mentioned before, $dim \mathcal{H} = N$. Moreover we assume to have a set of matrix polynomials $\{ f_1 , \dots, f_N \}$ ( $f_i: M_N (\mathbb{C}) \rightarrow M_N(\mathbb{C})$) and the degree of each matrix polynomial is not greater that $N$.

Then we shall define opeators $F_k$ in the way
\begin{equation}
F_k := f_k (F) \text{  for  } k=1, \dots, N.
\end{equation}

Now we shall proceed to analyzing another model of evolution of open quantum systems. We propose to consider observability criteria for a generator of evolution given by 
\begin{equation}\label{eq:evolution1}
\mathbb{L} \rho = \sum_{k=1}^N  \gamma_k \left ( [F_k \rho, F_k^*] + [F_k, \rho F_k^*] \right),
\end{equation}
where $\gamma_k \geq 0$ for $k=1,\dots, N$.

One can notice that we have $F_k ^* = F_k$ because $F^* = F$. Thus the evolution equation introduced in \eqref{eq:evolution1} can be expressed by means of double commutators
\begin{equation}\label{eq:evolution}
\mathbb{L} \rho = - \sum_{k=1}^N \gamma_k [F_k , [F_k, \rho ]].
\end{equation}

If one wanted to consider this problem in general, one would observe that it is impossible to reach a specific results due to a large number of parameters (one should bear in mind that each matrix polynomial $f_k$ can in general depend on $N+1$ coefficients). Therefore, in order to be able to draw concrete conclusions concerning conditions for quantum tomography, we propose to analyze the case when the matrix polynomials $\{f_1, \dots, f_N\}$ are defined in the following way
\begin{equation}
f_k (F) := F^k, \text{ for } k=1,\dots,N.
\end{equation}

This assumption allows to rewrite the evolution equation in the form
\begin{equation}\label{eq:model2}
\mathbb{L} \rho = - \sum_{k=1}^N \gamma_k [F^k , [F^k, \rho ]].
\end{equation}

Before introducing the main theorem we shall assume that the spectrum of the operator $F$ has the following structure: $r$ different eigenvalues denoted by $\{ \alpha_1, \cdots, \alpha_r\}$ ($\alpha_i \in \mathbb{R}$) and with each $\alpha_i$ we associate its multiplicity denoted by $n_i$. As $F$ in this model is assumed to be Hermitian, the geometric and algebraic multiplicities are equal. Therefore, there is no need to differentiate between them.

For the generator of evolution given by \eqref{eq:model2} we shall prove the following theorem concerning the index of cyclicity.

\begin{thm}
The index of cyclicity for a quantum system with the generator of evolution given by \eqref{eq:model2} can be expressed by one of the three formulas below.

1. If $\exists_{d\in \mathbb{N}}$ such that $\gamma_{2d-1} \neq 0$, then
\begin{equation}\label{eq:thmeq1}
\eta = \sum_{i=1}^r n_i ^2.
\end{equation}

If $\gamma_{2d-1}=0$ for all $d$ such that $d\in \mathbb{N}$ and $(2d-1)\leq N$ then we may have two situations:

2. if $r = 2u + 1$ for $u \in \mathbb{N}$ then
\begin{equation}\label{eq:thm22}
\eta = \sum_{i=1}^{2u+1} n_i ^2 + 2 \sum_{i=1}^u n_i n_{2u+2-i},
\end{equation}
3. if $r= 2u$ for $ u \in \mathbb{N}$ then
\begin{equation}\label{eq:thm23}
\eta = \sum_{i=1}^{2u} n_i ^2 + 2 \sum_{i=1}^u n_i n_{2u+1-i}.
\end{equation}
\end{thm}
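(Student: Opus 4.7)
The plan is to follow the vectorization strategy used in the proof of the preceding theorem. Since $F$ is Hermitian, every power $F^{k}$ is Hermitian as well, and applying the identity $\mathrm{vec}(AXB)=(B^{T}\otimes A)\,\mathrm{vec}(X)$ to the double commutator $[F^{k},[F^{k},\rho]]$ yields
\[
\mathbb{L}=-\sum_{k=1}^{N}\gamma_{k}\bigl((F^{k})^{T}\otimes \mathbb{I}_{N}-\mathbb{I}_{N}\otimes F^{k}\bigr)^{2},
\]
which is self-adjoint and negative semi-definite. Because $\mathbb{L}$ is Hermitian, algebraic and geometric multiplicities coincide, so $\eta$ is simply the largest multiplicity of an eigenvalue of $\mathbb{L}$.

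Next I would exploit that every operator of the form $(F^{k})^{T}\otimes \mathbb{I}_{N}$ or $\mathbb{I}_{N}\otimes F^{k}$ (across all $k$) commutes with every other such operator: the two factors of any Kronecker product commute automatically, while powers of $F$ commute among themselves. The whole family is therefore simultaneously diagonalizable. A joint eigenvector indexed by a pair $(i,j)$ spans an eigenspace of dimension $n_{i}n_{j}$, since $F^{k}$ has eigenvalue $\alpha_{i}^{k}$ on a subspace of dimension $n_{i}$ and $(F^{k})^{T}$ has the same spectrum with the same multiplicities. On such an eigenspace $\mathbb{L}$ acts by multiplication by
\[
\lambda_{ij}=-\sum_{k=1}^{N}\gamma_{k}(\alpha_{i}^{k}-\alpha_{j}^{k})^{2},
\]
so the ``bare'' multiplicity attached to each index pair is $n_{i}n_{j}$, and the actual multiplicity of a value $\lambda$ is the sum of $n_{i}n_{j}$ over all pairs $(i,j)$ with $\lambda_{ij}=\lambda$.

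The remainder of the argument is a case analysis of such coincidences. The diagonal pairs $(i,i)$ always contribute $n_{i}^{2}$ to the kernel of $\mathbb{L}$, so $\dim\ker\mathbb{L}\geq\sum_{i}n_{i}^{2}$. For Case~1, assume $\gamma_{2d-1}\neq 0$ for some $d\in\mathbb{N}$. Then $\lambda_{ij}=0$ forces $(\alpha_{i}^{2d-1}-\alpha_{j}^{2d-1})^{2}=0$, and since $x\mapsto x^{2d-1}$ is injective on $\mathbb{R}$, this gives $\alpha_{i}=\alpha_{j}$, i.e.\ $i=j$. Thus the kernel has dimension exactly $\sum_{i}n_{i}^{2}$; any nonzero $\lambda_{ij}$ is shared with at most its symmetric partner $(j,i)$ for a total contribution $2n_{i}n_{j}\leq n_{i}^{2}+n_{j}^{2}\leq \sum_{l}n_{l}^{2}$, which gives $\eta=\sum_{i}n_{i}^{2}$ as claimed.

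For Cases~2 and 3 all odd-indexed $\gamma_{2d-1}$ vanish, so only even powers of $F$ contribute and $\lambda_{ij}=0$ already when $\alpha_{i}^{k}=\alpha_{j}^{k}$ for every even $k\leq N$, i.e.\ when $\alpha_{i}=\pm\alpha_{j}$. Under the natural labelling in which the nonzero eigenvalues appear in symmetric pairs $\alpha_{i}=-\alpha_{r+1-i}$ (with $\alpha_{u+1}=0$ when $r=2u+1$), each pair $(i,r+1-i)$ with $i\neq r+1-i$ adds a further $n_{i}n_{r+1-i}$ to the kernel; summing these contributions produces $\sum n_{i}^{2}+2\sum_{i=1}^{u}n_{i}n_{2u+2-i}$ when $r=2u+1$ and $\sum n_{i}^{2}+2\sum_{i=1}^{u}n_{i}n_{2u+1-i}$ when $r=2u$, matching \eqref{eq:thm22} and \eqref{eq:thm23}. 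The main obstacle I anticipate is ruling out further accidental coincidences among the nonzero $\lambda_{ij}$: the argument needs $|\alpha_{1}|,\dots,|\alpha_{r}|$ to be in sufficiently generic position, so that the only systematic collisions $\lambda_{ij}=\lambda_{i'j'}$ are those forced by the involution $i\mapsto r+1-i$. Justifying this genericity, which is left implicit in the theorem, is where the real work of the proof would lie.
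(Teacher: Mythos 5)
Your proposal is correct and follows essentially the same route as the paper: vectorization to obtain $\mathbb{L}$ explicitly, reading off the eigenvalues $\lambda_{ij}=-\sum_{k}\gamma_{k}(\alpha_i^k-\alpha_j^k)^2$ with bare multiplicities $n_i n_j$, and counting which pairs $(i,j)$ collapse onto zero, with the odd-power injectivity argument handling Case~1. The genericity issue you flag at the end is real, but the paper resolves it only by silently adding an assumption inside the proof of Cases~2 and~3 --- that the $\alpha_i$ form a symmetric arithmetic progression, so that $\alpha_i=-\alpha_{r+1-i}$ --- which is exactly the symmetric labelling you posit; like you, the paper does not verify that no nonzero eigenvalue of $\mathbb{L}$ acquires a larger multiplicity than the kernel in those cases.
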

\begin{proof}

To start the proof one needs to notice that the explicit matrix form of the operator $\mathbb{L}$ is given by
\begin{equation}\label{eq:explicit}
\mathbb{L} = - \sum_{k=1}^N \gamma_k \left ( F^{2k} \otimes \mathbb{I}_N - 2 F^k \otimes F^k + \mathbb{I}_N \otimes F^{2k} \right ),
\end{equation}
where the transposition that should appear according to vectorization equality has been skipped due to the fact that it does not influence the spectrum of $\mathbb{L}$.

One can then observe that the spectrum of the operator $\mathbb{L}$ is given by
\begin{equation}\label{eq:spectrum}
\sigma (\mathbb{L}) = \{ \lambda_{ij} \in \mathbb{R}; \text{ } \lambda_{ij} = - \sum_{k=1}^N \gamma_k ( \alpha_i ^k  - \alpha_j ^k )^2, \text{ for } i,j=1,\dots, r\}.
\end{equation}

Important observation in this analysis is that $\lambda_{ij} = 0$ for $\alpha_i = \alpha_j$. Therefore the total multiplicity of $0$ as an eigenvalue of $\mathbb{L}$ (denoted by $m_0$) is at least equal to
\begin{equation}\label{eq:atleast}
m_0 = \sum_{i=1}^r n_i ^2.
\end{equation}

The question is whether this multiplicity can be greater under some additional assumptions concerning the structure of the spectrum of $F$.

To prove point 1. of the theorem one needs to take into account the condition given in there, which means that there exists at least one non-zero $\gamma_k$ with $k$ being odd, i.e. $k=2d-1$ for some $d\in \mathbb{N}$. One can notice that with this condition $\lambda_{ij} =0$ \textbf{only} for $\alpha_j = \alpha_j$. Moreover in this case it is not possible to assume any symmetry between the eigenvalues of $F$ that can lead to the situation in which the total multiplicity of some eigenvalue of $\mathbb{L}$ will be greated that the total multiplicity of zero. Therefore, the index of cyclicity in this case is the same as the \eqref{eq:atleast}, which completes the proof of the point 1.

In points 2. and 3. of the theorem 3 we consider the case when $\gamma_k$ for every $k$ being odd is equal $0$. Under this assumption the generator of evolution can be simplified and written in the form
\begin{equation}
\mathbb{L} \rho = - \sum_{k=1} ^{[\frac{N}{2}]} \gamma_{2k} [ F^{2k}, [F^{2k}, \rho]],
\end{equation}
where symbol $[x]$ denotes the integer part of $x$.

Consequently, the spectrum of the operato $\mathbb{L}$ can be written as
\begin{equation}
\sigma (\mathbb{L}) = \{ \lambda_{ij}  \in \mathbb{R}; \lambda_{ij} = - \sum_{k=1} ^{[\frac{N}{2}]} \gamma_{2k} ( \alpha_i ^{2k} - \alpha_j ^{2k})^2, \text{ for } i,j =1,\dots, r \}.
\end{equation}

Now we shall analyze two cases.

1. Case for $r=2u +1$, where $u \in \mathbb{N}$.

To get a specific result we shall assume that the eigenvalues of $F$ constitute an arithmetic sequence such that
\begin{equation}
\alpha_1 =u c \text{  where  } c\in \mathbb{R}
\end{equation}
and
\begin{equation}
\alpha_{i+1} - \alpha_i = -c \text{  for  } i=1,\dots, r-1.
\end{equation}

Bearing in mind this assumption one can notice that
\begin{equation}\label{eq:extrazero}
\alpha_i ^{2k} - \alpha_{2u+2-i}^{2k} = 0,
\end{equation}
where $i=1,2,\dots, u, u+2, u+3, \dots, 2u+1$ and $k =1,\dots, [\frac{N}{2}]$. Equality \eqref{eq:extrazero} implies that $\lambda_{i(2u+2-i)} = 0$, which means that the total multiplicity of zero as the eigenvalues of $\mathbb{L}$ can be expressed as
\begin{equation}
\eta = \sum_{i=1}^{2u+1} n_i ^2 + 2 \sum_{i=1} ^u n_i n_{2u+2-i},
\end{equation}
which is also the index of cyclicity of the generator of evolution and therefore the proof is completed.

2. Case for $r = 2u$, where $u \in \mathbb{N}$.

To get a specific result we shall assume that the eigenvalues of $F$ constitute an arithmetic sequence such that
\begin{equation}
\alpha_ i = \begin{cases}uc + (1-i) c \text{  for  } i=1, \dots, u;  \\ uc - ic \text{  for  } i = u+1, \dots, 2u. \end{cases}
\end{equation}

One can easily notice that if the spectrum of $F$ is defined in such a way, the following equality holds for $k \in \{ 1, \dots,  [\frac{N}{2}]\}$
\begin{equation}
\alpha_i ^{2k} - \alpha_{2u+1-i}^{2k} = 0,
\end{equation}
where $i=1, \dots, 2u$. This equailty implies that $\lambda_{i(2u+1-i)} = 0$. Therefore the total multiplicity of zero as the eigenvalue of $\mathbb{L}$ is equal
\begin{equation}
\eta = \sum_{i=1}^{2u} n_i ^2 + 2 \sum_{i=1}^u n_i n_{2u+1-i},
\end{equation}
which completes the proof.
\end{proof}

In this section it has been proved that if one makes specific assumptions about the character of evolution of an N-level quantum system, one can obtain concrete results concerning conditions for observability.

\section{Conclusion}

In this paper we have formulated two evolution models of $N$-level open quantum systems and we have applied to them the stroboscopic approach to quantum tomography. By making specific assumptions it was possible to determine for the analyzed generators of evolution the index of cyclicity, which physically refers to the minimal number of distinct observables that are necessary to perform quantum tomography. Because it focuses on determining the optimal criteria for quantum tomography, the stroboscopic approach has potential applications in experiments. This article contains two results connected with the criteria for quantum tomography, however the number of problems that can be solved with the stroboscopic approach is unlimited. One can treat the content of this article as a demonstration of possible applications of the stroboscopic tomography to selected evolution models. Consequently, the methods presented here can be used by one to solve more specific problems that one encounters in their research.

\section*{Acknowledgement}
This research has been supported by grant No. DEC-2011/02/A/ST1/00208 of National Science Center of Poland.

\end{document}